\crefname{figure}{Figure}{figures}
\newcommand{\R}{\mathbb{R}}
\newcommand{\N}{\mathbb{N}}
\newcommand{\multiline}[1]{%
    \begin{tabularx}{\dimexpr\linewidth-\ALG@thistlm}[t]{@{}X@{}}
        #1
    \end{tabularx}
}
\DeclareFontFamily{U}{tipa}{}
\DeclareFontShape{U}{tipa}{m}{n}{<->tipa10}{}
\newcommand{\arc@char}{{\usefont{U}{tipa}{m}{n}\symbol{62}}}%
\newcommand{\arc}[1]{\mathpalette\arc@arc{#1}}
\newcommand{\arc@arc}[2]{%
  \sbox0{$\m@th#1#2$}%
  \vbox{
    \hbox{\resizebox{\wd0}{\height}{\arc@char}}
    \nointerlineskip
    \box0
  }%
}
\newcommand{\partitions}{\textsf{S}}
\DeclareMathOperator{\tsp}{\mathsf{TSP}}
\newcommand{\TSP}{\tsp} 
\newtheorem{lemma}{Lemma}
\title{Balanced TSP partitioning}
\author{Benjamin Aram Berendsohn\thanks{Max Planck Institute for Informatics, Saarbrücken, Germany. Work supported by DFG grant KO 6140/1-2. \texttt{benjamin.berendsohn@fu-berlin.de}}
\and
Hwi Kim\thanks{Department of Computer Science and Engineering, Pohang University of Science and Technology, Republic of Korea. \texttt{hwikim@postech.ac.kr}.}
\and
László Kozma\thanks{Institut für Informatik, Freie Universität Berlin, Germany. Work supported by DFG grant KO 6140/1-2. \texttt{laszlo.kozma@fu-berlin.de}.}}
\date{}
\begin{document}
\maketitle

\begin{abstract}
The \emph{traveling salesman problem} (TSP) famously asks for a shortest tour that a salesperson can take to visit a given set of cities in any order. In this paper, we ask how much faster $k \ge 2$ salespeople can visit the cities if they divide the task among themselves. We show that, in the two-dimensional Euclidean setting, two salespeople can always achieve a speedup of at least $\frac12 + \frac1\pi \approx 0.818$, for any given input, and there are inputs where they cannot do better. We also give (non-matching) upper and lower bounds for $k \geq 3$.
\end{abstract}

\section{Introduction}

The \emph{traveling salesman problem} (TSP) asks, given $n$ cities and their pairwise distances, for the shortest tour that visits each city.
It is one of the best studied optimization problems, well-known to be NP-hard even in the planar Euclidean case, i.e., if the cities are points in $\mathbb{R}^2$, and the distance between any two points is the Euclidean distance~\cite{GareyJohnson}. This restricted version of the problem admits a polynomial-time approximation scheme (PTAS)~\cite{Arora, Mitchell}, whereas the more general case of metric distances is known to admit an approximation ratio slightly below $1.5$ (from the recent breakthrough~\cite{tsp_recent} that improved the longstanding approximation ratio of $1.5$ due to Christofides~\cite{Christofides}). 

In this paper, we focus on the two-dimensional Euclidean case and study the following natural question: How much faster can all cities be visited if multiple salespeople can collaborate on performing the task, and each city has to be visited by at least one salesperson? More formally, we wish to cover a given point set $P$ with multiple (say,~$k$) closed curves, instead of a single one, minimizing the maximum length among the $k$ curves. Intuitively, the $k$ salespeople execute their tours simultaneously, all have to return to their respective (arbitrary) starting points, and the goal is to have all of them finish before a given deadline.

The specific question we ask is: How does the cost improve with the parameter $k$, when compared to the normal TSP cost, in the worst case?
This ratio, which we precisely define next, can be seen as an inherent measure of \emph{decomposability} of the TSP problem.

A \emph{tour} of a point set $P \subset \R^2$ is a closed polygonal curve that contains each point in $P$.
A tour is \emph{optimal} if its length is minimal among all tours.\footnote{We do not require the vertices of a tour to be from $P$, though this is clearly the case for \emph{optimal} tours.} The length of an optimal tour of $P$ is denoted by $\TSP(P)$. Note that the optimal tour is necessarily a simple polygon, and in particular, it visits each point in $P$ exactly once. (We still allow degenerate cases, such as polygons with one or two corners, or polygons on multiple collinear points.)
Let $\partitions_k(P)$ denote the set of partitions of $P$ into at most $k$ subsets, and let
\[ \TSP_k(P) = \min_{R \in \partitions_k(P)} \max_{Q \in R} \TSP(Q). \]
Intuitively, the quantity $\TSP_k(P)$ corresponds to the least amount of time $k$ salespeople need to serve the points in $P$. In particular, $\TSP_1(P) = \TSP(P)$.

The ratio $\gamma(P,k) = \TSP_k(P) / \TSP(P)$ indicates the advantage that $k$ salespeople have over a single one. Observe that always $\gamma(P,k) \le 1$, but $\gamma(P,k)$ can be arbitrarily small (e.g., if $P$ consists of two very small clusters that are far apart from each other). In consequence, it makes sense to ask how \emph{large} $\gamma(P,k)$ can get, i.e., how much of an improvement we can \emph{guarantee} when using multiple salespeople. Accordingly, we define:
\[ \gamma(k) = \sup_P \gamma(P,k) = \sup_P 
\left( \frac{\TSP_k(P)}{\TSP(P)} 
\right). \]

Clearly, $\gamma(1) = 1$.
The main result of this paper is the precise determination of $\gamma(2)$.

\begin{restatable}{theorem}{restateBoundsTwo}\label{thm:bounds-2}
	$\gamma(2) = \tfrac12 + \tfrac1\pi \approx 0.818$.
\end{restatable}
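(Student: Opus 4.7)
The plan is to establish matching lower and upper bounds on $\gamma(2)$ via a worst-case construction and a universal partition strategy, respectively.

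For the lower bound $\gamma(2) \geq \tfrac{1}{2}+\tfrac{1}{\pi}$, I would take $P_n$ to be $n$ equally spaced points on the unit circle. Since these lie in convex position, $\TSP(P_n)$ equals the inscribed regular $n$-gon's perimeter $2n\sin(\pi/n) \to 2\pi$, and the same convex-hull identity applies to any nonempty subset $S \subseteq P_n$. In any partition $\{P_1,P_2\}$, the larger part has $m \geq n/2$ points and $\TSP(S) = \sum_{i=1}^m 2\sin(\theta_i/2)$, where $\theta_i$ are the $m$ angular gaps summing to $2\pi$, each a positive multiple of $2\pi/n$. As a concave function on this polytope, the sum is minimized at an extreme point where all excess is concentrated in a single gap, yielding $\TSP(S) \geq 2\sin((m-1)\pi/n) + (m-1)\cdot 2\sin(\pi/n)$. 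This expression is easily checked to be monotonically non-decreasing in $m$ over $[n/2, n]$, so its minimum at $m = n/2$ gives $\TSP(S) \geq \pi+2-o(1)$, and hence $\gamma(P_n,2) \to (\pi+2)/(2\pi) = \tfrac{1}{2}+\tfrac{1}{\pi}$.

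For the upper bound $\gamma(2) \leq \tfrac{1}{2}+\tfrac{1}{\pi}$, the plan is to bisect the optimal tour $T$ of $P$, of length $L=\TSP(P)$. The key lemma is: there exist points $a,a^* \in T$ cutting $T$ into two arcs of length $L/2$ each with $|aa^*| \leq L/\pi$. I would prove this via Fourier analysis in the arc-length parametrization $\tau:[0,L]\to\R^2$ (identifying $\R^2$ with $\mathbb{C}$): expanding $\tau = \sum_n \hat{\tau}_n e^{2\pi i n t/L}$, the chord function $c(t) = \tau(t+L/2)-\tau(t) = -2\sum_{n\text{ odd}} \hat{\tau}_n e^{2\pi i n t/L}$ retains only odd modes. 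Parseval, together with the arc-length identity $\sum_n n^2 |\hat{\tau}_n|^2 = L^2/(4\pi^2)$ and the trivial bound $|\hat{\tau}_n|^2 \leq n^2 |\hat{\tau}_n|^2$ for $n \neq 0$, gives $\tfrac{1}{L}\int_0^L |c(t)|^2\,dt \leq L^2/\pi^2$, so some $t$ realizes $|c(t)| \leq L/\pi$. Given such $a,a^*$, I would split $P$ into the points lying on each half-arc of $T$; each part admits a tour of length at most $L/2 + L/\pi$ obtained by following its half-arc and closing with the chord $aa^*$, which proves the bound.

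The main obstacle is the antipodal chord lemma. Although essentially classical, its application here takes some care since the optimal tour is polygonal: $\tau$ is only Lipschitz, so Parseval and the arc-length identity must be applied in $L^2$ rather than the smooth sense. The lower bound is mostly routine once the convex-position observation is made; the only subtlety there is the extreme-point analysis, which is clean because $\theta \mapsto 2\sin(\theta/2)$ is concave on $[0, 2\pi]$, and the monotonicity check on the resulting expression reduces to $1 + \cos((m-\tfrac{1}{2})\pi/n) \geq 0$ for $m \in [n/2, n]$.
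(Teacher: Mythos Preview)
Your proposal is correct on both halves, and both halves differ from the paper's arguments in interesting ways.

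For the lower bound, the paper proves that among all $m$-point subsets of $P_{2n}$ the consecutive arc $A_m$ has the shortest tour, via a local ``swap'' argument (move an isolated point next to its neighbour and check, using a $\sin$-inequality, that the tour does not get longer). Your route is more structural: you write $\TSP(S)$ as $\sum_i 2\sin(\theta_i/2)$ over the angular gaps, observe this is a concave function on the simplex $\{\theta_i \ge 2\pi/n,\ \sum \theta_i = 2\pi\}$, and read off the minimum at a vertex. This is cleaner and avoids the induction; one small slip is that the discrete monotonicity in $m$ actually reduces to $\cos(\pi/(2n)) + \cos((m-\tfrac12)\pi/n) \ge 0$, not $1+\cos(\cdot)\ge 0$ (the latter is trivially true but in the wrong direction), though the correct inequality follows immediately from $(m-\tfrac12)\pi/n \le \pi - \pi/(2n)$ for $m \le n$.

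For the upper bound, the paper proves the antipodal-chord lemma geometrically: Cauchy's formula gives a direction in which the convex hull has width at most $|C|/\pi$, and an intermediate-value argument then finds a halving diagonal parallel to that direction. Your Fourier/Wirtinger argument is a genuinely different and elegant proof of the same lemma: it bounds the \emph{mean} squared antipodal chord by $L^2/\pi^2$ rather than exhibiting a specific short direction. The paper's approach has the advantage that it extends verbatim to diagonals cutting off an arbitrary prescribed arc length (used later for $\gamma(a+b)$ bounds), whereas your argument as stated is tailored to the antipodal case; that said, replacing $(-1)^n - 1$ by $e^{2\pi i n \alpha}-1$ and using $|e^{2\pi i n\alpha}-1|^2 = 4\sin^2(\pi n\alpha) \le 4n^2$ for $n\neq 0$ pushes your method through for arbitrary $\alpha$ as well.
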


We further give some lower and upper bounds for $\gamma(k)$ when $k \ge 3$.

\begin{restatable}{theorem}{restateLBGeneral}\label{thm:lb-general}
	For all $k \ge 2$, we have $\gamma(k) \ge \frac{1}{k} + \frac{1}{\pi} \sin \frac{\pi}{k}$.
\end{restatable}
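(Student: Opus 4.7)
The plan is to exhibit a single-parameter family of point sets achieving the claimed ratio in the limit, namely $P_n$ = the $n$ vertices of a regular polygon inscribed in the unit circle, with $n$ taken to be a large multiple of $k$. A direct computation gives $\TSP(P_n) = 2n\sin(\pi/n)$, which tends to $2\pi$. The heart of the proof will be to show that every partition of $P_n$ into $k$ parts leaves some part whose TSP is at least $L(n/k,n)$, where $L(m,n) = 2(m-1)\sin(\pi/n) + 2\sin((m-1)\pi/n)$, and then to observe that $L(n/k,n) \to 2\pi/k + 2\sin(\pi/k)$, so the ratio approaches $\tfrac1k + \tfrac1\pi \sin\tfrac\pi k$.

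The core step is a lower bound on $\TSP(S)$ for $S \subseteq P_n$ with $|S|=m$. Since the points of $S$ lie in convex position on the circle, the optimal tour visits them in cyclic order, yielding $\TSP(S) = \sum_{j=1}^m 2\sin(\Delta_j/2)$, where $\Delta_1, \ldots, \Delta_m$ are the successive arc gaps of $S$ (each at least $2\pi/n$, summing to $2\pi$). Because $\Delta \mapsto \sin(\Delta/2)$ is concave on $[0,2\pi]$ (its second derivative is $-\tfrac14 \sin(\Delta/2) \le 0$ there), the sum is minimized at a vertex of the simplex $\{\Delta : \sum_j \Delta_j = 2\pi,\ \Delta_j \ge 2\pi/n\}$. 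Such a vertex has $m-1$ gaps equal to $2\pi/n$ and one equal to $2\pi - (m-1)\cdot 2\pi/n$, corresponding to $m$ consecutive points on the circle, and evaluating gives exactly $\TSP(S) \ge L(m,n)$.

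To conclude, $L(m,n)$ is increasing in $m$ for $m \le n/2 + 1$ (differentiate: $\partial_m L = 2\sin(\pi/n) + 2(\pi/n)\cos((m-1)\pi/n) > 0$ once $\cos \ge 0$), and any partition of $P_n$ into $k$ parts has some part of size at least $\lceil n/k \rceil$, so $\TSP_k(P_n) \ge L(\lceil n/k \rceil, n)$. Taking $n$ a multiple of $k$ and letting $n \to \infty$, $L(n/k,n) \to 2\pi/k + 2\sin(\pi/k)$ and $\TSP(P_n) \to 2\pi$, so $\gamma(P_n, k) \to \tfrac1k + \tfrac1\pi \sin\tfrac\pi k$, and $\gamma(k) = \sup_P \gamma(P,k)$ yields the theorem. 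The only substantive obstacle is the subset-TSP lower bound, which reduces cleanly to the concavity argument above; everything else is pigeonhole, monotonicity, and standard limit calculations.
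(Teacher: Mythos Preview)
Your overall strategy matches the paper's: take the regular $n$-gon $P_n$ with $k \mid n$, show that among all $m$-element subsets of $P_n$ the TSP is minimized by $m$ consecutive vertices, apply pigeonhole, and pass to the limit. Where you differ is in the proof of the subset-TSP lower bound. The paper proves it by an exchange argument (its Lemma~\ref{lem:circle-partition-helper}), repeatedly sliding an isolated point next to its neighbour and invoking the inequality $\sin a + \sin(t-a) \ge \sin b + \sin(t-b)$ for $a \ge b$ (its Lemma~\ref{lem:sin-add}). Your concavity argument---$\Delta \mapsto \sin(\Delta/2)$ is concave on $[0,2\pi]$, so the sum is minimized at a vertex of the simplex $\{\Delta_j \ge 2\pi/n,\ \sum \Delta_j = 2\pi\}$---reaches the same conclusion in one step and is arguably cleaner; the paper's exchange argument, on the other hand, is more constructive and makes explicit the local move that improves any non-consecutive configuration.

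There is one small gap. You prove $L(m,n)$ increasing only for $m \le n/2+1$, but the largest part in a $k$-partition can have size well beyond $n/2+1$ (e.g.\ $k=2$, sizes $n-1$ and $1$), so the step ``some part has size $\ge n/k$, hence $\TSP_k(P_n) \ge L(n/k,n)$'' is not yet justified. Two easy repairs: either observe that monotonicity actually holds on all of $[1,n]$, since $\partial_m L \ge 0$ reduces to $\cos((m-1)\pi/n) \ge -\operatorname{sinc}(\pi/n)$ and the worst case $m=n$ gives $-\cos(\pi/n) \ge -\operatorname{sinc}(\pi/n)$, i.e.\ $\tan(\pi/n) \ge \pi/n$; or, more simply, bypass monotonicity of $L$ altogether by noting that if $|Q_i| \ge n/k$ then any $(n/k)$-element subset $S \subseteq Q_i$ satisfies $\TSP(Q_i) \ge \TSP(S) \ge L(n/k,n)$, using the triangle inequality for the first step and your concavity bound for the second.
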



\begin{restatable}{theorem}{restateUBMultiply}\label{thm:ub-multiply}
	For all $a, b \in \N$, we have $\gamma(a \cdot b) \le \gamma(a) \cdot \gamma(b)$.
\end{restatable}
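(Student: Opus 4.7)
The plan is a straightforward two-level refinement: first partition $P$ into $b$ groups optimally, then partition each group into $a$ groups near-optimally, obtaining an $ab$-partition whose worst subset has small TSP.

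In detail, fix any finite $P \subset \R^2$. Since $P$ is finite there are only finitely many partitions, so the minimum in the definition of $\TSP_b(P)$ is attained by some partition $P = Q_1 \cupdot \cdots \cupdot Q_b$ (with possibly empty parts) satisfying $\max_i \TSP(Q_i) = \TSP_b(P)$. For each $i$, the definition of $\gamma(a)$ as a supremum gives $\TSP_a(Q_i) \le \gamma(a) \cdot \TSP(Q_i)$ whenever $\TSP(Q_i) > 0$; in the remaining case $\TSP(Q_i) = 0$ we have $|Q_i| \le 1$ and so trivially $\TSP_a(Q_i) = 0$. Hence in either case $\TSP_a(Q_i) \le \gamma(a) \cdot \TSP(Q_i)$, and again the minimum defining $\TSP_a(Q_i)$ is attained by some partition $Q_i = Q_{i,1} \cupdot \cdots \cupdot Q_{i,a}$ with $\max_j \TSP(Q_{i,j}) \le \gamma(a) \cdot \TSP(Q_i)$.

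Concatenating these refinements yields a partition of $P$ into at most $ab$ parts $\{Q_{i,j}\}_{i,j}$ with
\[
\max_{i,j} \TSP(Q_{i,j}) \le \gamma(a) \cdot \max_i \TSP(Q_i) = \gamma(a) \cdot \TSP_b(P) \le \gamma(a) \cdot \gamma(b) \cdot \TSP(P),
\]
where the last inequality again uses $\TSP_b(P)/\TSP(P) \le \gamma(b)$ (handling the degenerate case $\TSP(P) = 0$ trivially). Therefore $\TSP_{ab}(P) \le \gamma(a)\gamma(b) \cdot \TSP(P)$, and taking the supremum over $P$ gives $\gamma(ab) \le \gamma(a)\gamma(b)$.

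There is no real obstacle here beyond bookkeeping: the argument is just ``optimal partitioning composes,'' and the only things to check are that (i) the $\gamma$'s can be applied pointwise since they are suprema of ratios, and (ii) degenerate subsets with $\TSP = 0$ (empty parts or singletons arising when $|P| < ab$) do not break the inequality. Both are immediate from the definitions.
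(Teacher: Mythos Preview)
Your proof is correct and follows essentially the same two-level refinement as the paper's proof (the paper splits first into $a$ parts and then each into $b$, but this is symmetric). You are in fact slightly more careful than the paper about the supremum versus attained value and about degenerate cases with $\TSP=0$, but the underlying argument is identical.
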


\begin{restatable}{theorem}{restateSplitUB}\label{thm:ub-split}
	For all $a, b \in \N$, we have \[ \gamma(a+b) \le \left( 1 + \tfrac 2 \pi \right) \cdot \frac{\gamma(a) \cdot \gamma(b)}{\gamma(a)+\gamma(b)}. \]
\end{restatable}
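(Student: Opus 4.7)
The plan is to prove the theorem by a divide-and-conquer reduction from $(a+b)$-partitioning to $a$-partitioning on one piece and $b$-partitioning on the other. Let $P$ be an arbitrary point set and set $\beta := \gamma(a)/(\gamma(a)+\gamma(b))$. I construct a $2$-partition $P = P_A \cup P_B$ satisfying
\[
    \TSP(P_A) \le (1 + \tfrac{2}{\pi})(1-\beta)\TSP(P)
    \quad\text{and}\quad
    \TSP(P_B) \le (1 + \tfrac{2}{\pi})\beta \TSP(P),
\]
and then apply the definitions of $\gamma(a)$ and $\gamma(b)$ to $P_A$ and $P_B$. The resulting $(a+b)$-partition of $P$ has maximum sub-TSP bounded by $(1 + \tfrac{2}{\pi})\tfrac{\gamma(a)\gamma(b)}{\gamma(a)+\gamma(b)}\TSP(P)$, which is the claimed inequality.

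To build the $2$-partition, I take an optimal tour $T$ of $P$, of length $L := \TSP(P)$, parametrize it by arc length, and cut at two points $x = T(s_1)$, $y = T(s_2)$, obtaining arcs of lengths $L_1$ and $L_2 = L-L_1$. Concatenating each arc with the segment $xy$ yields a closed curve of length $L_i + d$ (where $d := |xy|$) containing the subset of $P$ on that arc, so $\TSP(P_A) \le L_1 + d$ and $\TSP(P_B) \le L_2 + d$. The theorem therefore reduces to exhibiting $x, y$ with
\[
    L_1 + d \le (1 + \tfrac{2}{\pi})(1-\beta)L
    \quad\text{and}\quad
    L_2 + d \le (1 + \tfrac{2}{\pi})\beta L.
\]

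The main obstacle is the existence of such a cut. For $\beta = 1/2$ this is the classical ``antipodal-chord'' lemma ($d \le L/\pi$ when $L_1 = L_2 = L/2$), which also underpins \Cref{thm:bounds-2}; I need a version for arbitrary arc ratios. Concretely, I would prove: \emph{for every closed rectifiable curve of length $L$ and every $\alpha \in [0, 1]$, there exist two points at arc distance $\alpha L$ and Euclidean distance at most $(L/\pi)\sin(\pi \alpha)$}. The proof is via Fourier expansion of $T$ by arc length: writing $T(s) = \sum_k c_k\, e^{2\pi i k s/L}$, the unit-speed identity $\sum_k k^2 |c_k|^2 = L^2/(4\pi^2)$ combined with the elementary inequality $|\sin(k\theta)| \le k|\sin\theta|$ bounds
\[
    \tfrac{1}{L}\int_0^L |T(s+\alpha L) - T(s)|^2\,ds \;\le\; L^2 \sin^2(\pi\alpha)/\pi^2,
\]
and the minimum is at most the mean.

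Finally, equipped with this generalized lemma, I choose the arc ratio $\alpha := L_1/L$ carefully. Letting $f(\alpha) := \alpha + (1/\pi)\sin(\pi\alpha)$ and $g(\alpha) := (1-\alpha) + (1/\pi)\sin(\pi\alpha)$, the two required inequalities become $f(\alpha) \le (1+\tfrac{2}{\pi})(1-\beta)$ and $g(\alpha) \le (1+\tfrac{2}{\pi})\beta$. Since $f$ is continuous, non-decreasing, and maps $[0,1]$ onto $[0,1]$, there is $\alpha^* \in [0,1]$ with $f(\alpha^*) = (1+\tfrac{2}{\pi})(1-\beta)$ whenever this value is at most $1$, which covers $\beta \ge 2/(\pi+2)$; the complementary regime is handled symmetrically by saturating $g$ instead. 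The identity $f(\alpha) + g(\alpha) = 1 + (2/\pi)\sin(\pi\alpha) \le 1 + 2/\pi$ then ensures that saturating one of the two inequalities forces the other.
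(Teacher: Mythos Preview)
Your overall strategy matches the paper's: cut an optimal tour $T$ of length $L$ at two points into arcs of lengths $\alpha L$ and $(1-\alpha)L$, close each arc with the chord of length $d$, recurse with $\gamma(a)$ and $\gamma(b)$ on the two pieces, and then balance. The difference lies in the key lemma controlling $d$.

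The paper proves the simpler statement (its Lemma~\ref{lem:short-diag}) that for \emph{every} $\alpha\in(0,1)$ there is a cut with $d \le L/\pi$, using only Cauchy's formula for the mean width of the convex hull together with an intermediate-value argument to realize any prescribed chord direction. This yields $\gamma(a+b)\le\max\{(\alpha+\tfrac1\pi)\gamma(a),\,(1-\alpha+\tfrac1\pi)\gamma(b)\}$, which is \emph{linear} in $\alpha$; equating the two terms gives the stated bound in one line. You instead establish the sharper inequality $d \le (L/\pi)\sin(\pi\alpha)$ via a Wirtinger/Fourier argument on the arc-length parametrization (Parseval on $T'$ combined with $|\sin k\theta|\le |k|\,|\sin\theta|$), and then use the identity $f(\alpha)+g(\alpha)=1+\tfrac2\pi\sin(\pi\alpha)\le 1+\tfrac2\pi$ to find an $\alpha$ meeting both targets. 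This is correct (your case split at $\beta=2/(\pi+2)$ works exactly as you say), and your chord lemma is genuinely stronger than the paper's, being tight on the circle. However, you only exploit it to recover the same inequality the theorem states; the extra precision of $\sin(\pi\alpha)$ over the constant $1$ is discarded in the final step. The paper's route is shorter and stays within elementary convex geometry, while yours buys a sharper estimate at the cost of a heavier analytic tool and a less transparent optimization.
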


We leave determining tight bounds for $k \geq 3$ as a challenging open question. Approximate results for small values of $k$ are found in \cref{tab:overview}. A generalization of the problem to higher dimensions or to more general metric spaces is likewise interesting. 

%
%

\begin{table}
	\centering
	\begin{tabular}{rrrl}
		$k$ & Lower bound & Upper bound & Upper bound reference\\
		\hline
		1 & 1 & 1 & trivial\\
		2 & $\approx 0.818$ & $\approx 0.818$ & Thm.~\ref{thm:bounds-2}\\
		3 & $\approx 0.609$ & $\approx 0.737$ & Thm.~\ref{thm:ub-split} with $1+2=3$\\
		4 & $\approx 0.475$ & $\approx 0.670$ & Thm.~\ref{thm:ub-multiply} with $2\cdot2=4$\\
		5 & $\approx 0.387$ & $\approx 0.634$ & Thm.~\ref{thm:ub-split} with $2+3=5$\\
		6 & $\approx 0.326$ & $\approx 0.603$ & Thm.~\ref{thm:ub-multiply} with $2\cdot3=6$\\
		7 & $\approx 0.281$ & $\approx 0.574$ & Thm.~\ref{thm:ub-split} with $3+4=7$\\
		8 & $\approx 0.247$ & $\approx 0.548$ & Thm.~\ref{thm:ub-multiply} with $2\cdot4=8$\\
		9 & $\approx 0.220$ & $\approx 0.533$ & Thm.~\ref{thm:ub-split} with $4+5=9$\\
		10 & $\approx 0.198$ & $\approx 0.519$ & Thm.~\ref{thm:ub-multiply} with $2\cdot5=10$
	\end{tabular}
	\caption{Approximate bounds for $\gamma(k)$, $1 \le k \le 10$. All lower bounds are derived from Theorem~\ref{thm:lb-general}.}\label{tab:overview}
\end{table}

\subsection{Related work} Although many variants of TSP with multiple salespeople have been studied in the literature, 
we are not aware of the ratio $\gamma(k)$ being explicitly considered before. 
We mention a few works that study problems of a similar flavor. 

The \emph{min-max cycle cover} problem refers to a quantity essentially equivalent to $\TSP_k$ in a more general weighted graph setting; e.g., see~\cite{yu2016improved}. In the TSP literature, closely related problems include the \emph{$k$-person TSP},  \emph{multiple TSP}, and \emph{multi-depot vehicle routing}, e.g., see~\cite{bektas2006multiple, xu20103, montoya2015literature, braekers2016vehicle} (and references therein). These problems are studied under various optimization criteria and often with additional constraints; for instance, in several formulations a starting point for each salesperson, or alternatively, a common starting point for all, are specified. A ``depot-free'' variant essentially matching our setting has been studied in~\cite{cornejo2023compact}. All these works aim at computing optimal or approximate solutions for given input instances, whereas our concern is the worst-case ratio $\gamma(k)$ in a geometric/Euclidean case. 

In a geometric setting, a cost-ratio similar to ours has been studied for the \emph{minimum spanning tree} (MST) problem, but with \emph{sum}, rather than \emph{maximum} criterion, e.g., see~\cite{ameli, dumitrescu}. Bereg et al.~\cite{bereg} study a partitioned MST and TSP problem, where starting points are given and the parts must have \emph{equal cardinality}. Arkin et al.~\cite{arkin} and Johnson~\cite{johnson} study optimization problems under a two-partitioning similar to ours, but with a pairing of points given as part of the input, requiring each pair to be separated.

\section{Preliminaries}

A \emph{curve} $C$ is the image of a continuous function $c \colon [0,\ell] \rightarrow \R^2$, for some $\ell \in \R_+$. We call $c$ a \emph{parametrization} of $C$.
If $c(0) = c(\ell)$, then the curve is \emph{closed}.
It is sometimes useful to extend the domain of $c$ to $\R_{\ge 0}$; in this case, we set $c(x) = c(x) \bmod \ell$ if $x > \ell$. Observe that this function is still continuous for closed curves.


We define \emph{polygons} and \emph{polygonal curves} in the usual way. Observe that a polygonal curve $C$ can be parametrized with a function $c \colon [0,\ell] \rightarrow \R^2$ such that for two points $c(x)$ and $c(y)$ on the curve, the distance in clockwise direction along the curve between $c(x)$ and $c(y)$ is precisely $|x-y|$. We then call $c$ a \emph{canonical parametrization} of $C$. We write $|C| = \ell$ for the length of $C$.

Given a closed curve $C$ and two distinct points $p, q$ on $C$, we call the line segment $pq$ a \emph{diagonal}, and we denote by $C(p, q)$ the subpath of $C$ from $p$ to $q$ in the positive direction according to the parametrization. Formally, let $c \colon [0,\ell] \rightarrow \R^2$ be a parametrization of $C$ and let $p = c(x_1)$ and $q = c(x_2)$. Then $C(p, q)$ is the curve parametrized by $c'$, where $c'$ is $c$ restricted to the interval $[x_1, x_2]$ if $x_1 < x_2$, or to the interval $[x_2, \ell + x_1]$ otherwise.

\begin{figure}[ht]
	\centering
	\includegraphics{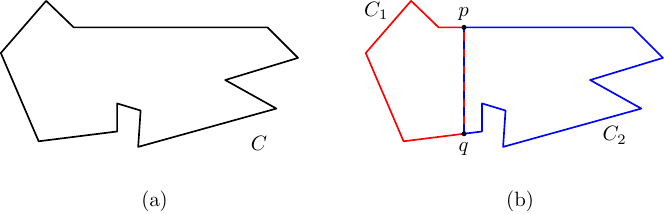}
	\caption{(a) A curve $C$. (b) Curves $C(p,q)$ and $C(q,p)$, assuming a clockwise parametrization.}
	\label{fig:inducedCurves}
\end{figure}

\subparagraph{Width.}

Given an angle $\theta$, let $u_\theta = (\cos \theta, \sin \theta)$ be the unit vector in direction $\theta$.
The \emph{(directional) width} of a closed curve $C$ in direction $\theta$ is the minimum distance between two parallel lines, orthogonal to $u_\theta$, that enclose $C$ (see \cref{fig:width}). Equivalently, it is the length of the projection of $C$ to a line parallel to $u_\theta$. Formally, we write:
\[ w(C, u_{\theta}) = \max_{p \in C} \langle u_{\theta}, p \rangle  - \min_{p \in C} \langle u_{\theta}, p \rangle. \]
The \emph{width} of a closed curve $C$ is the minimum width over all directions:
\[ w(C) = \min_{\theta \in [0, \pi)} w(C,u_{\theta}). \]

\begin{figure}[t]
	\begin{center}
		\includegraphics[width=.7\textwidth]{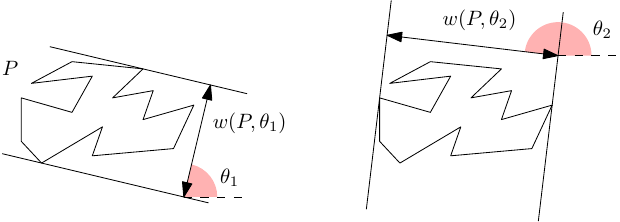}
	\end{center}
	\caption{A polygon $P$ and its width with respect to directions $\theta_1$ and $\theta_2$. Here $w(P, \theta_1) < w(P, \theta_2)$.}
	\label{fig:width}
\end{figure}

It is well known~\cite{cauchy} that the \emph{mean width} (computed by integration over all directions) of a closed convex curve is precisely its length divided by $\pi$. Since the width of a curve is clearly at most its mean width, we have:

\begin{lemma}\label{lem:cauchy}
	For any convex closed curve $C$ on $\mathbb{R}^2$, 
	$w(C) \le |C|/\pi$.
\end{lemma}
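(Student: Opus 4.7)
The plan is to derive the Cauchy formula $|C| = \int_0^\pi w(C, u_\theta)\, d\theta$ and then invoke the elementary fact that the minimum of a nonnegative function over $[0,\pi)$ is at most $1/\pi$ times its integral. Once both pieces are in place, the chain $w(C) \le \tfrac{1}{\pi}\int_0^\pi w(C, u_\theta)\,d\theta = |C|/\pi$ completes the proof.

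For the Cauchy formula, I would parametrize $C$ by arc length as $c \colon [0, |C|] \to \R^2$ and denote by $\phi(s)$ the direction of the tangent at $c(s)$ (well-defined almost everywhere for a convex curve). Projecting an infinitesimal arc element onto the line spanned by $u_\theta$ contributes length $|\cos(\phi(s) - \theta)|\,ds$, so the total variation of the projection is $\int_0^{|C|} |\cos(\phi(s) - \theta)|\, ds$. The key geometric observation is that, since $C$ is convex and closed, its projection onto any line is traversed exactly twice as $s$ goes around $C$ (once on each side of the pair of supporting lines orthogonal to $u_\theta$). Hence this total variation equals $2\, w(C, u_\theta)$.

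I would then integrate both sides over $\theta \in [0,\pi)$ and swap the order of integration (Fubini):
\[
\int_0^\pi 2\, w(C, u_\theta)\, d\theta \;=\; \int_0^{|C|} \!\left( \int_0^\pi |\cos(\phi(s) - \theta)|\, d\theta \right) ds \;=\; \int_0^{|C|} 2\, ds \;=\; 2|C|,
\]
using that $\int_0^\pi |\cos(\phi - \theta)|\, d\theta = 2$ for every $\phi$ by periodicity. This yields the Cauchy formula. Combined with $w(C) \le \tfrac{1}{\pi}\int_0^\pi w(C, u_\theta)\,d\theta$, which holds because the minimum of a function is at most its average, the lemma follows.

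The statement is classical and the paper already cites \cite{cauchy}, so there is no serious obstacle; the only point that deserves a careful word is the ``traversed twice'' claim for convexity, which fails in general (a non-convex curve can project onto a segment with multiplicity greater than two, giving total variation larger than $2w$, and the equality is essential to get the sharp constant $\pi$). I would therefore emphasize convexity exactly at that step, perhaps by noting that the preimage of any interior point of the projection interval consists of exactly two points on $C$.
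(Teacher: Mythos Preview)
Your proposal is correct and follows essentially the same approach as the paper: the paper states (citing \cite{cauchy}) that the mean directional width of a convex closed curve equals $|C|/\pi$ and then observes that the minimum width is at most the mean width. You reproduce exactly this chain, the only difference being that you supply a self-contained derivation of the Cauchy formula $|C| = \int_0^\pi w(C,u_\theta)\,d\theta$ rather than citing it.
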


\section{Lower bounds: the circular point set}\label{sec:circle}

We start with an illustrative special case that will also provide our lower bound for the case $k = 2$. Let $P_n$ be a set of $n \ge 2$ points, arranged regularly-spaced along the unit circle. More precisely, define $P_n = \{ p_i \mid 1 \le i \le n \}$ with $p_i = ( \cos (2\pi \tfrac i n), \sin (2\pi \tfrac i n))$. See \cref{fig:evenPointsProof}~(a).

\begin{figure}
    \centering
    \includegraphics{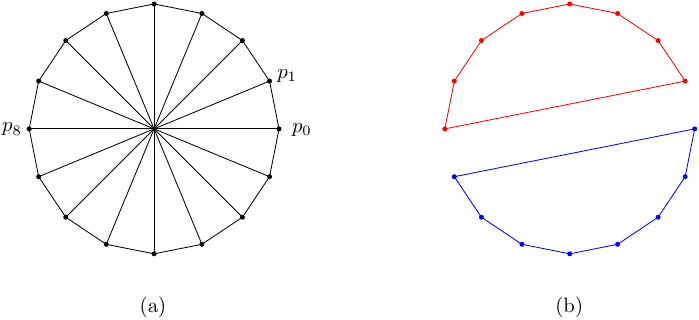}
    \caption{(a) Point set $P_{16}$. (b) Balanced partition of $P_{16}$.}
    \label{fig:evenPointsProof}
\end{figure}

Clearly, the shortest tour of $P_n$ goes around the circle, approaching a length of $2 \pi$ as $n$ goes to infinity. Thus, we have $\TSP(P_n) \rightarrow 2 \pi$. Now, consider the case $k = 2$. The intuitively best way of splitting the point set is shown in \cref{fig:evenPointsProof}~(b): Divide the circle with a straight cut into equal parts (assume for simplicity that $n$ is even). For both parts, take the half-circle plus the diagonal as a tour. As $n$ goes to infinity, the length of each of these tours will tend towards $\pi + 2$ (half the circumference plus twice the radius). Thus, the following holds:
\[ \lim_{n \rightarrow \infty} \gamma(P_n,2) \le \tfrac{\pi+2}{2 \pi} = \tfrac12 + \tfrac1\pi. \]

Later, in \cref{sec:ub}, we show that this upper bound holds for \emph{every} point set, using a similar technique of ``halving'' the optimal tour.

For our circular point set, it turns out that the bound is tight, since the above construction is optimal, as we show now. The main technical lemma is the following.

\begin{lemma}\label{lem:circle-subsets}
	Let $P_{2n} = \{ p_i \mid 1 \le i \le 2n \}$ with $p_i = ( \cos (2\pi \tfrac i {2n}), \sin (2\pi \tfrac i {2n}))$. Let $m \le 2n$, let $A_m = \{p_1, p_2, \dots, p_m\}$, and let $B \subseteq P_{2n}$ be an arbitary subset of size $m$. Then $\TSP(A_m) \le \TSP(B)$.
\end{lemma}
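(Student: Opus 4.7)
The plan is to exploit the fact that $P_{2n}$ lies on the unit circle, so every subset is in convex position. For points in convex position, the optimal tour is the boundary of the convex hull: any crossing in a planar tour can be uncrossed via the triangle inequality, and the unique simple polygon through $m$ points in convex position is their convex hull boundary. Thus both $\TSP(A_m)$ and $\TSP(B)$ equal perimeters of polygons inscribed in the unit circle, determined by their cyclic sequence of central angles.

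I would parametrize via half-central-angles. Writing the cyclic central angles between consecutive vertices of the convex hull of $B$ as $\beta_1, \ldots, \beta_m$ and setting $\alpha_i = \beta_i / 2$, we get $\sum_{i=1}^m \alpha_i = \pi$ with each $\alpha_i$ a positive integer multiple of $\pi/(2n)$. Since a chord of the unit circle subtending a central angle $\beta$ has length $2\sin(\beta/2)$,
\[ \TSP(B) = 2 \sum_{i=1}^m \sin \alpha_i. \]
For $A_m$ the configuration is $\alpha_1 = \cdots = \alpha_{m-1} = \pi/(2n)$ and $\alpha_m = \pi - (m-1)\pi/(2n)$. The lemma then reduces to showing that this specific tuple minimizes $\sum_i \sin \alpha_i$ among all feasible tuples.

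The minimization I would handle by a concavity-driven exchange argument. The key step: if $0 < \alpha \le \beta$ and $h > 0$ satisfy $\alpha - h \ge 0$ and $\beta + h \le \pi$, then $\sin(\alpha - h) + \sin(\beta + h) \le \sin \alpha + \sin \beta$; indeed, $t \mapsto \sin(\alpha - t) + \sin(\beta + t)$ has derivative $\cos(\beta + t) - \cos(\alpha - t) \le 0$, since $\cos$ is decreasing on $[0, \pi]$ and $\alpha - t \le \beta + t$ throughout. Given any feasible configuration differing from that of $A_m$, at least two angles exceed $\pi/(2n)$; I pick the largest as $\beta$, any other exceeding $\pi/(2n)$ as $\alpha$, and apply the step with $h = \pi/(2n)$. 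Feasibility of $\beta + \pi/(2n) \le \pi$ follows from $\sum \alpha_i = \pi$ together with the lower bound $\alpha_i \ge \pi/(2n)$ on every coordinate. Termination is ensured by the potential $\sum \alpha_i^2$, which strictly increases by at least $2(\pi/(2n))^2$ per exchange while remaining bounded above by $\pi^2$.

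The main obstacle is the careful bookkeeping in the exchange step, namely simultaneously verifying the feasibility conditions ($\alpha - h \ge \pi/(2n)$ and $\beta + h \le \pi$) and establishing termination; the rest is essentially routine once one recognizes the convex-position reduction and the inscribed-angle formula.
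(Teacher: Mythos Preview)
Your proposal is correct and follows essentially the same approach as the paper: reduce $\TSP(B)$ to the perimeter $2\sum_i \sin\alpha_i$ of an inscribed polygon via convex position, then use an exchange argument driven by the concavity of $\sin$ on $[0,\pi]$ to show that the ``all-but-one-minimal'' angle tuple of $A_m$ is the minimizer. The only cosmetic difference is in how the exchange is organized---the paper moves one point of $B$ at a time (shrinking a gap to the minimum and enlarging the adjacent gap, via its Lemma~\ref{lem:sin-add}), whereas you transfer a single quantum $\pi/(2n)$ from any above-minimum angle to the current maximum; both realize the same concavity inequality and terminate for the same reason.
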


Before proving Lemma~\ref{lem:circle-subsets}, 
let us argue how it implies our claim. Take any partition of $P_{2n}$ into two sets $A$ and $B$. Without loss of generality, we have $|B| \ge n$. Thus, by Lemma~\ref{lem:circle-subsets}, we have $\TSP(B) \ge \TSP(A_n)$. Therefore, the partition $\{A_n, P_{2n} \setminus A_n\}$ is optimal, as desired. As argued above, we have $\lim_{n \rightarrow \infty} \TSP(A_n) / \TSP(P_{2n}) = \tfrac12 +\tfrac1\pi$. Hence, we have $\gamma(P_{2n},2) \rightarrow \tfrac12 + \tfrac1\pi$, and thus $\gamma(2) \ge \tfrac12 + \tfrac1\pi$. This proves the lower bound of Theorem~\ref{thm:bounds-2}.

To prove Lemma~\ref{lem:circle-subsets}, we need the following technical lemma.

\begin{lemma}\label{lem:sin-add}
		Let $t \in [0, 2\pi]$ and let $a,b \in [0, \tfrac t 2]$ with $a \ge b$. Then $\sin(a) + \sin(t-a) \ge \sin(b) + \sin(t-b)$.
\end{lemma}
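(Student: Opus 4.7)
The plan is to reduce the claim to a monotonicity statement about a single trigonometric function by applying the sum-to-product identity. Specifically, I would write
\[
\sin(a) + \sin(t-a) = 2\sin\!\left(\tfrac{t}{2}\right)\cos\!\left(\tfrac{t}{2}-a\right),
\]
and likewise for $b$. So the inequality to prove becomes
\[
2\sin\!\left(\tfrac{t}{2}\right)\cos\!\left(\tfrac{t}{2}-a\right) \ge 2\sin\!\left(\tfrac{t}{2}\right)\cos\!\left(\tfrac{t}{2}-b\right).
\]

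Next I would check the signs and ranges dictated by the hypothesis $t \in [0, 2\pi]$ and $a, b \in [0, t/2]$ with $a \ge b$. Since $t/2 \in [0, \pi]$, the prefactor $\sin(t/2)$ is non-negative, so it suffices to verify that $\cos(t/2 - a) \ge \cos(t/2 - b)$. Writing $\alpha = t/2 - a$ and $\beta = t/2 - b$, the hypotheses give $\alpha, \beta \in [0, t/2] \subseteq [0, \pi]$ and $\alpha \le \beta$. Because cosine is non-increasing on $[0, \pi]$, we conclude $\cos(\alpha) \ge \cos(\beta)$, which is exactly the needed inequality.

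This is a short argument and I do not anticipate any real obstacle; the only subtlety is ensuring that all angles involved genuinely lie in the interval $[0, \pi]$ where the monotonicity of cosine and nonnegativity of sine can be cleanly invoked, and the hypothesis $t \in [0, 2\pi]$ is precisely what guarantees this. The boundary cases $t = 0$ and $a = b$ give equality, consistent with the non-strict inequality in the statement.
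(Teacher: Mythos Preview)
Your proof is correct. The paper takes a slightly different route: it defines $f(x) = \sin(x) + \sin(t-x)$ and shows that $f'(x) = \cos(x) - \cos(t-x) \ge 0$ on $[0, t/2]$ via a case distinction on whether $t \le \pi$ or $t > \pi$. Your use of the sum-to-product identity is more direct and sidesteps both the derivative and the case split, since the factorization $2\sin(t/2)\cos(t/2 - x)$ makes the monotonicity in $x$ immediate once one observes that $\sin(t/2) \ge 0$ and that $t/2 - x$ ranges over $[0, t/2] \subseteq [0, \pi]$, where cosine is decreasing. Both arguments are short; yours is arguably the cleaner of the two.
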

\begin{proof}
	Let $f(x) = \sin(x) + \sin(t-x)$. It suffices to show that $f(x)$ is increasing on the interval $[0,\pi]$.
	
	Note that $f'(x) = \cos(x) - \cos(t-x)$. We want to show that $f'(x) \ge 0$ for $x \in [0,\pi]$, i.e.:
	\begin{align}
		\cos(x) \ge \cos(t-x).\label{eq:sin-add}
	\end{align}
	
	Indeed, if $t \le \pi$, then (\ref{eq:sin-add}) is equivalent to $x \le t-x$, which is equivalent to the assumption $x \le \tfrac t 2$. Otherwise, if $t \ge \pi$, then (\ref{eq:sin-add}) is equivalent to $x \le 2\pi - (t-x)$, which is true by assumption $t \le 2\pi$.
\end{proof}

We now prove Lemma~\ref{lem:circle-subsets}. First, recall that an optimal tour does not intersect itself (unless all points collinear). 
Thus, for every subset $B \subseteq P_{2n}$, the optimal tour on $B$ is a simple polygon whose vertices follow the order of the points on the circle.

Now take some subset $B \subseteq P_{2n}$ of size $m$. Without loss of generality, let $p_1 \in B$. We show that $B$ can be transformed into $A_m$ with a series of steps, each of which does not increase $\TSP(B)$.

\begin{lemma}\label{lem:circle-partition-helper}
	Suppose $p_i \in B$, $p_{i+1}, \dots, p_{j-1} \notin B$, and $p_j \in B$. If $B' = B \setminus \{p_j\} \cup \{p_{i+1}\}$, then $\TSP(B') \le \TSP(B)$.
\end{lemma}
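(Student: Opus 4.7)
The plan is to construct an explicit tour of $B'$ from the optimal tour of $B$ by a one-point swap, and then show via Lemma \ref{lem:sin-add} that this swap does not increase the tour length; this is enough since $\TSP(B')$ is a minimum over tours. Because an optimal TSP tour is non-crossing (and no three points of $P_{2n}$ are collinear), the optimal tour of $B$ visits its points in circular order on the circle. Hence $p_i$ and $p_j$ are consecutive in this tour; let $p_k \in B$ be the point that immediately follows $p_j$ in the tour. The two edges of the tour incident to $p_j$ are $p_ip_j$ and $p_jp_k$, with total length $2\sinbang{j-i} + 2\sinbang{k-j}$ by the unit-circle chord formula $|p_ap_b| = 2\sinbang{|a-b|}$. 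Replacing $p_j$ with $p_{i+1}$ in the tour sequence yields a valid closed polygonal curve that visits each point of $B'$ exactly once; its length is
\[ \TSP(B) - 2\sinbang{j-i} - 2\sinbang{k-j} + 2\sinbang{1} + 2\sinbang{k-i-1}. \]
So it suffices to prove
\[ \sinbang{1} + \sinbang{k-i-1} \;\le\; \sinbang{j-i} + \sinbang{k-j}. \]

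The key observation driving the inequality is that both index-pairs sum to the same value: $1+(k-i-1) = (j-i)+(k-j) = k-i$. Setting $t = \pi(k-i)/(2n)$, the desired inequality has the form $\sin(a') + \sin(t-a') \le \sin(a) + \sin(t-a)$ with $a' = \pi/(2n)$ and $a = \pi\min(j-i,k-j)/(2n)$. By the hypothesis $p_{i+1},\dots,p_{j-1} \notin B$ we have $j-i \ge 2$ (otherwise $p_{i+1}=p_j$ and $B' = B$, so there is nothing to show), which gives $a \ge a' > 0$. Moreover $t \le \pi$, since the two arcs from $p_i$ to $p_j$ to $p_k$ together cover at most the full circle (total index-length at most $2n$), so both $a, a' \in [0, t/2]$. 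Lemma \ref{lem:sin-add} then yields the required inequality directly.

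The main obstacle I expect is purely bookkeeping: indices of $i, j, k$ are naturally cyclic modulo $2n$, so the "integer" statement $i < j < k$ need not hold when the tour wraps around, and for $|B|=2$ the point $p_k$ coincides with $p_i$ so the two edges degenerate. The cleanest way around both issues is to work with arc-distances rather than raw indices, noting that the chord length $2\sinbang{d}$ is symmetric under $d \mapsto 2n-d$, and that the two arc-distances $d_1 = j - i$ and $d_2 = k - j$ (taken in the tour direction) always satisfy $d_1 + d_2 \le 2n$. With this convention the above Lemma \ref{lem:sin-add} argument applies uniformly across all cases, including the $|B| = 2$ case where $d_1 + d_2 = 2n$ and $t = \pi$.
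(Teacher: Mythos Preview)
Your proof is correct and follows essentially the same approach as the paper's: identify the neighbor of $p_j$ on the other side (the paper calls it $p_\ell$; you call it $p_k$), observe that the only edges that change are the two incident to $p_j$, and reduce to the inequality $\sin(\pi\tfrac{1}{2n}) + \sin(\pi\tfrac{\ell-i-1}{2n}) \le \sin(\pi\tfrac{j-i}{2n}) + \sin(\pi\tfrac{\ell-j}{2n})$, which is Lemma~\ref{lem:sin-add}. Your framing (bound $\TSP(B')$ by an explicitly constructed tour rather than asserting both optimal tours differ only in two edges) and your explicit handling of the cyclic indices and the $|B|=2$ degeneracy are slightly more careful than the paper's terse version, but the argument is the same; one cosmetic point is that your choice of $k$ for the successor index collides with the paper's global use of $k$ for the number of salespeople.
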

\begin{proof}
	Let $\ell > j$ be minimal such that $p_\ell \in B$. (All index operations in this proof are taken $\bmod$ $2n$.)
	By our observation above, the optimal tour of $B$ and the optimal tour of $B'$ only differ in the edges $p_ip_j$, $p_jp_\ell$ (for $B$) and the edges $p_ip_{i+1}$, $p_{i+1} p_\ell$ (for $B'$). See \cref{fig:circle-tour-step} for an illustration.
	
	Observe that $|p_i p_j| = 2 \sin( \pi \tfrac{j-i}{2n})$. Lemma~\ref{lem:sin-add} implies that $ \sin( \pi \tfrac {j-i} {2n} ) + \sin( \pi \tfrac {\ell-j} {2n} ) \ge \sin( \pi \tfrac 1 {2n} ) + \sin( \pi \tfrac {\ell-i-1} {2n} )$, which concludes the proof.
\end{proof}

\begin{figure}
	\centering
	\includegraphics{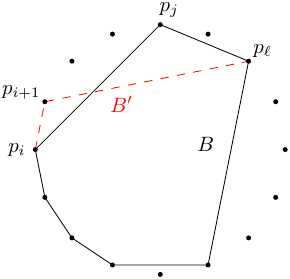}
	\caption{One step in the transformation of a tour $B$ into $A_m$.}\label{fig:circle-tour-step}
\end{figure}

Using Lemma~\ref{lem:circle-subsets}, it is not hard to extend the lower bound to larger $k$:

\restateLBGeneral*
\begin{proof}
	Consider the point set $P_{kn}$. Lemma \ref{lem:circle-subsets} naturally extends to this point set, so we have that a balanced $k$-partition of $P_{kn}$ must consist of $k$ sets of $n$ consecutive points on the circle. Let $A$ denote one of these sets. We have
	\begin{align*}
		& \TSP(P_{kn}) = kn \cdot 2 \sin(\tfrac \pi {kn}), \text{ and}\\
		& \TSP(A) = (n-1) \cdot 2 \sin(\tfrac \pi {kn}) + 2 \sin(\tfrac{\pi(n-1)}{kn}). 
	\end{align*}
	Thus,
	\begin{align*}
		& \gamma(P_{kn})  = \frac{\TSP(A)}{\TSP(P_{kn})} = \frac{n-1}{kn} + \frac{\sin(\tfrac{\pi(n-1)}{kn})}{kn \sin(\tfrac \pi {kn})} \xrightarrow{n \rightarrow \infty} \frac 1 k + \frac{\sin(\tfrac \pi k)}{\pi}.\tag*{\qedhere}
	\end{align*}
\end{proof}

\section{Upper bounds via short diagonals}\label{sec:ub}

We now prove several upper bounds for $\gamma(k)$. Recall that to show $\gamma(k) \le \alpha$, we need to argue that for \emph{every} point set $P$, there is a partition into $k$ subsets $Q_1, \dots, Q_k$ with $\TSP(Q_i) \le \alpha \cdot \TSP(P)$ for all $i \in [k]$.


Our  upper bound for $\gamma(2)$ adapts the idea for the circular point set from \cref{sec:circle}. Essentially, we take the optimal tour $C$ of $P$, split it in two at some point $p$ and its antipodal point $q$, and split the point set accordingly into $Q_1, Q_2$. Let $C_1 = C(p,q) \cup pq$ and $C_2 = C(q,p) \cup pq$. Observe that $C_1$, resp.\ $C_2$ are tours of $Q_1$, resp.\ $Q_2$, and $|C_1| = |C_2| = |C|/2 + |pq|$.
It turns out that we can always find an antipodal pair $p, q$ on $C$ such that the diagonal $|pq|$ is short.

\begin{lemma}\label{lem:short-halving-diag}
	Let $C$ be a closed polygonal curve. Then, there exists a diagonal $pq$ of $C$ such that $|C(p,q)| = |C|/2$ and $|pq| \le \tfrac 1 \pi |C|$.
\end{lemma}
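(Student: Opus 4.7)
The plan is to pass to an arc-length parametrization $c \colon \R \to \R^2$ of $C$ with period $L = |C|$, and to study the ``antipodal displacement'' map $d \colon [0, L/2] \to \R^2$ defined by $d(t) = c(t + L/2) - c(t)$. A halving diagonal of $C$ corresponds precisely to a vector $d(t)$, so it suffices to exhibit some $t^\ast$ with $\|d(t^\ast)\| \le L/\pi$; setting $p = c(t^\ast)$ and $q = c(t^\ast + L/2)$ then yields the required diagonal.

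I would first collect two structural properties of $d$. Since $c$ is $1$-Lipschitz (being a canonical parametrization), $d$ is $2$-Lipschitz, so the length of the planar curve traced by $d$ is at most $2 \cdot (L/2) = L$. A direct computation also gives $d(L/2) = c(L) - c(L/2) = c(0) - c(L/2) = -d(0)$, so as a parametrized curve $d$ joins some point $a \in \R^2$ to its negation $-a$.

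The main step is then a planar ``antipodal winding'' inequality: any curve in $\R^2$ of length at most $L$ joining a point $a$ to $-a$ must come within distance $L/\pi$ of the origin. To prove it, suppose $\|d(t)\| \ge \delta > 0$ for all $t$; lift $d$ to polar coordinates $(r(t), \theta(t))$ with $r(t) \ge \delta$ and $\theta$ continuous. Since the endpoints of $d$ are antipodal, $\theta$ must change by at least $\pi$, so $\int_0^{L/2} |\theta'(t)| \, dt \ge \pi$. The a.e.\ identity $\|d'(t)\|^2 = r'(t)^2 + r(t)^2 \theta'(t)^2 \ge \delta^2 \theta'(t)^2$ then yields $L \ge \int_0^{L/2} \|d'(t)\| \, dt \ge \delta \int_0^{L/2} |\theta'(t)| \, dt \ge \pi \delta$, whence $\delta \le L/\pi$. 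Tightness on a round circle, where the bound is achieved, is a useful sanity check.

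The main obstacle will be making the polar lift and the differential inequality rigorous for a merely piecewise linear curve, where derivatives exist only almost everywhere. Since $d$ is Lipschitz this is standard, but some care is needed at the breakpoints inherited from the vertices of $C$. One also has to handle the degenerate case $\delta = 0$, in which $c(t^\ast) = c(t^\ast + L/2)$ and the diagonal collapses to a point; this can either be declared vacuous (the inequality $0 \le L/\pi$ being trivial) or avoided by an infinitesimal perturbation of $t^\ast$ producing a genuine diagonal of arbitrarily small length.
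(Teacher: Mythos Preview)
Your argument is correct, and it is genuinely different from the paper's. The paper takes the convex hull $H$ of $C$, invokes Cauchy's mean-width formula (Lemma~\ref{lem:cauchy}) to find a direction $\theta$ with $w(H,\theta) \le |C|/\pi$, and then uses an intermediate-value argument (Lemma~\ref{lem:curve-vec}) to produce an antipodal pair whose chord is parallel to $u_\theta$ and hence short. Your route bypasses both the convex hull and Cauchy's formula entirely: you work directly with the displacement curve $d(t) = c(t+L/2)-c(t)$, use that it has length at most $L$ and joins $a$ to $-a$, and then run a polar winding estimate to force $\min_t \|d(t)\| \le L/\pi$.

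What each approach buys: your proof is self-contained and avoids the integral-geometric fact about mean width, which is pleasant. The paper's approach, however, generalizes immediately to an \emph{arbitrary} split ratio $x$ (Lemma~\ref{lem:short-diag}), because the width bound and the intermediate-value step do not care whether $x = |C|/2$. Your antipodal trick relies crucially on $d(L/2) = -d(0)$, which is special to the halving case; for general $x$ the curve $t \mapsto c(t+x)-c(t)$ on $[0,L]$ is closed of length at most $2L$, and such a curve can certainly avoid the disk of radius $L/\pi$, so the winding argument does not extend. Since the paper later needs the general version for Theorem~\ref{thm:ub-split}, its method is the more useful one here, but for the halving lemma in isolation your argument is a clean alternative.
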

\begin{proof}[Proof sketch.]
	Let $H$ be the convex hull of $C$. Then the width of $H$ is at most $\tfrac1\pi |H| \le \tfrac1\pi |C|$, by Lemma~\ref{lem:cauchy}. This means that there exists a direction $\theta$ such that the width of $H$ in direction $\theta$ is at most $\tfrac1\pi |C|$. Clearly, the same is true for $C$, so any diagonal of $C$ that is parallel to $u_\theta$ has length at most $\tfrac1\pi |C|$. Finally, for \emph{every} direction $\theta$, there exists an antipodal pair $p, q$ such that the diagonal $pq$ is parallel to $u_\theta$. (Essentially, we can rotate $p$ around the curve, obtaining every possible direction.)
	For a full proof, see Lemma~\ref{lem:short-diag}.
\end{proof}

Using Lemma~\ref{lem:short-halving-diag}, the above discussion yields a partition of $P$ into two sets $Q_1$, $Q_2$ with $\TSP(Q_1), \TSP(Q_2) \le \tfrac12 |C| + \tfrac1\pi |C|$. Since $\TSP(P) = |C|$ by assumption, we have $\gamma(2) \le \tfrac12 + \tfrac1\pi$. This concludes the proof of Theorem~\ref{thm:bounds-2}.

Our second upper bound is a simple reduction for non-prime $k$.

\restateUBMultiply*
\begin{proof}
	Let $P$ be a point set.
	By definition, there is a partition of $P$ into sets $Q_1, Q_2, \dots, Q_a$ such that $\tsp(Q_i) \le \gamma(a) \cdot \tsp(P)$ for each $i \in [a]$. We can now further split each $Q_i$ into $b$ sets $Q_{i,1}, Q_{i,2}, \dots, Q_{i,b}$, for a total of $a \cdot b$ sets with $\tsp(Q_{i,j}) \le \gamma(b) \cdot \tsp(Q_i) \le \gamma(b) \cdot \gamma(a) \cdot \tsp(P)$. 
\end{proof}

Our third upper bound combines the short-diagonal technique with the recursive approach of Theorem~\ref{thm:ub-multiply}. The key is a generalization of Lemma~\ref{lem:short-halving-diag}. Essentially, we can always find a short diagonal to split the curve into not only two halves, but into two parts of \emph{any chosen length}. As a first step, we show that for every prescribed curve length and direction, we can find an appropriate diagonal.

\begin{lemma}\label{lem:curve-vec}
	Let $c \colon [0,1] \rightarrow \R^2$ be the parametrization of a closed curve, let $x \in \R$, $0 < x < 1$, and let $u$ be a vector. Then there exists some $t \in [0,1]$ such that the vector $c(t+x) - c(t)$ is orthogonal to $u$.
\end{lemma}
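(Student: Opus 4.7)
The plan is to translate the geometric claim into a one-dimensional intermediate value argument on the scalar function
\[ f(t) = \langle c(t+x) - c(t), u \rangle, \]
defined for $t \in [0,1]$, where $c$ is extended to all of $\R$ by the periodicity convention already adopted in the paper (i.e.\ $c(y) = c(y \bmod 1)$); this is legitimate since $c(0) = c(1)$ for a closed curve, and yields a continuous $c$ on $\R$. A zero of $f$ is exactly a parameter $t$ at which the diagonal vector $c(t+x) - c(t)$ is orthogonal to $u$.

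The heart of the argument is to show that $f$ has mean zero over $[0,1]$. By linearity of the inner product,
\[ \int_0^1 f(t)\, dt = \int_0^1 \langle c(t+x), u\rangle\, dt - \int_0^1 \langle c(t), u\rangle\, dt. \]
The substitution $s = t + x$, together with the $1$-periodicity of $t \mapsto \langle c(t), u\rangle$, rewrites the first integral as $\int_x^{1+x} \langle c(s), u\rangle\, ds = \int_0^1 \langle c(s), u\rangle\, ds$. Hence the difference vanishes.

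From here the conclusion is immediate. The function $f$ is continuous as a composition of the continuous curve $c$ with translation and a linear functional, and its integral over $[0,1]$ is zero. Therefore either $f \equiv 0$ on $[0,1]$, in which case every $t$ works, or $f$ attains strictly positive and strictly negative values, in which case the intermediate value theorem gives a $t \in [0,1]$ with $f(t) = 0$.

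There is no real obstacle; the only subtlety is to carefully invoke the periodic extension of $c$ so that $c(t+x)$ is well-defined and continuous even when $t + x > 1$. Once that is set up, the lemma is essentially the classical "mean value on the circle" trick.
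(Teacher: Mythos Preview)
Your proof is correct. Both the paper and you reduce the claim to finding a zero of the continuous function $f(t) = \langle c(t+x) - c(t), u\rangle$ and finish with the intermediate value theorem, but you reach the sign change by a different mechanism. The paper argues geometrically: at a parameter $t_0$ where the projection $t \mapsto \langle c(t), u\rangle$ is minimized one has $f(t_0) \ge 0$, and at a maximizer $t_1$ one has $f(t_1) \le 0$. You instead compute $\int_0^1 f(t)\,dt = 0$ via the substitution $s = t+x$ and $1$-periodicity of $\langle c(\cdot), u\rangle$, which forces $f$ to vanish somewhere unless it is identically zero. Your averaging argument is slightly slicker and makes no explicit appeal to compactness, at the (negligible) cost of invoking integrability of a continuous function; the paper's extremal-point argument is more visual and ties directly to the picture in \cref{fig:curve-vec}. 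Either route is perfectly adequate here.
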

\begin{figure}
	\centering
	\includegraphics{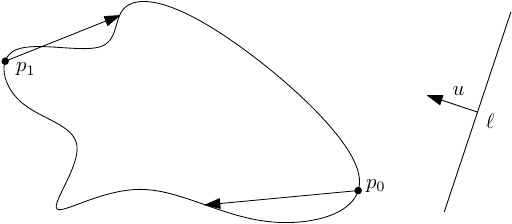}
	\caption{Illustration of Lemma~\ref{lem:curve-vec}.}\label{fig:curve-vec}
\end{figure}
\begin{proof}
	Define $v(t) = c(t+x) - c(t)$ and $f(t) = v(t) \cdot u$. We need to show that there is some $t \in [0,1]$ such that $f(t) = 0$.
	
	Let $p_0 = c(t_0)$ be a point on $c$ minimizing $p_0 \cdot u$. See \cref{fig:curve-vec}. Observe that $p_0$ exists, since the image of $c$ is compact. We claim that $f(t_0) \ge 0$. Indeed, if $f(t_0) < 0$, then, by definition, we have
	\begin{align*}
		( c(t_0+x) - c(t_0) ) \cdot u < 0 \iff c(t_0+x) \cdot u < c(t_0) \cdot u,
	\end{align*}
	which contradicts the minimality of $c(t_0) \cdot u$. 
	
	Similarly, let $p_1 = c(t_1)$ minimize $p_1 \cdot u$. By symmetry, we have $f(t_1) \le 0$.
	Finally, since $f$ is continuous, there must be some $t$ with $f(t) = 0$ by the intermediate value theorem. This concludes the proof.
\end{proof}

We now show how to split curves into parts of a prescribed length.

\begin{restatable}{lemma}{restateShortDiag}\label{lem:short-diag}
	Let $C$ be a closed polygonal curve. Then, for each given $x \in \R$, $0 < x < |C|$, there exists a diagonal $pq$ of $C$ such that $|C(p,q)| = x$ and the length of $pq$ is at most $\tfrac 1 \pi |C|$.
\end{restatable}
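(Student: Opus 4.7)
The plan is to combine the width bound from Lemma~\ref{lem:cauchy} with the intermediate-value argument of Lemma~\ref{lem:curve-vec}, exactly as the proof sketch of Lemma~\ref{lem:short-halving-diag} suggests, but without forcing the two arcs to have equal length.

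First, I would bound the width of $C$. Let $H$ be the convex hull of $C$. Since $H$ is a convex closed curve contained in (and enclosing) $C$, we have $|H|\le|C|$, and by Lemma~\ref{lem:cauchy}, $w(H)\le|H|/\pi\le|C|/\pi$. Pick a direction $\theta\in[0,\pi)$ realizing this minimum width; since $C\subseteq H$, the width of $C$ in direction $\theta$ also satisfies $w(C,u_\theta)\le|C|/\pi$. The key geometric consequence is that any chord of $C$ parallel to $u_\theta$ has length at most $w(C,u_\theta)$: if $pq$ is such a chord, then $|pq|=|\langle u_\theta,p\rangle-\langle u_\theta,q\rangle|\le w(C,u_\theta)\le|C|/\pi$.

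Second, I would produce a diagonal with the prescribed arc-length $x$ that is parallel to $u_\theta$. Let $c\colon[0,|C|]\to\R^2$ be a canonical parametrization of $C$, and consider its rescaling $\tilde c(t)=c(t\cdot|C|)$, a parametrization on $[0,1]$. Applying Lemma~\ref{lem:curve-vec} to $\tilde c$ with parameter $x/|C|\in(0,1)$ and with the vector $u=u_{\theta+\pi/2}$ (the unit vector perpendicular to $u_\theta$), we obtain some $t\in[0,1]$ such that $\tilde c(t+x/|C|)-\tilde c(t)$ is orthogonal to $u_{\theta+\pi/2}$, i.e., parallel to $u_\theta$. Setting $p=\tilde c(t)$ and $q=\tilde c(t+x/|C|)$, we get a diagonal $pq$ of $C$ parallel to $u_\theta$; and by the canonical parametrization, the arc from $p$ to $q$ in the positive direction has length exactly $x$, so $|C(p,q)|=x$.

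Combining the two steps: $pq$ is a diagonal with $|C(p,q)|=x$ and, being parallel to $u_\theta$, has length at most $w(C,u_\theta)\le|C|/\pi$, which is exactly the claim. The only subtle point — and really the only place the argument does more than the halving case — is choosing the right vector in Lemma~\ref{lem:curve-vec}: we want a chord parallel to $u_\theta$, which corresponds to asking the chord-vector $c(t+x)-c(t)$ to be orthogonal to $u_{\theta+\pi/2}$, not to $u_\theta$ itself. Once this is set up, the intermediate-value lemma handles existence and no further work is needed; in particular, there is no dependence on $x$ anywhere in the length bound, which is why the same bound $|C|/\pi$ works uniformly for every $x\in(0,|C|)$.
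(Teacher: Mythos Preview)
Your proof is correct and follows essentially the same approach as the paper: take the direction $\theta$ minimizing the width of the convex hull, invoke Lemma~\ref{lem:cauchy} to bound that width by $|C|/\pi$, and then apply Lemma~\ref{lem:curve-vec} with the vector orthogonal to $u_\theta$ to find a chord of the prescribed arc-length parallel to $u_\theta$. The only cosmetic difference is that you rescale the parametrization to $[0,1]$ to match the statement of Lemma~\ref{lem:curve-vec} exactly, whereas the paper works directly with the canonical parametrization on $[0,|C|]$.
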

\begin{proof}
	Let $c$ be a canonical parametrization of $c$ and let $H$ be the convex hull of $C$. 
	Let $\theta$ be the angle that minimizes the directional width of $H$, i.e., we have $w(H,\theta) = w(H)$. By Lemma~\ref{lem:cauchy}, we have $w(H) \le \tfrac 1 \pi |H| \le \tfrac 1 \pi |C|$.
	
	Now let $u$ be a vector orthogonal to $u_\theta$. By Lemma~\ref{lem:curve-vec}, there exists a $t \in [0,|C|]$ such that $c(t+x) - c(t)$ is orthogonal to $u$, and thus parallel to $u_\theta$. Let $p = c(t)$ and $q = c(t+x)$. Then $|C(p,q)| = x$ by definition, and the length of $pq$ is bounded by $w(C, \theta) \le w(H,\theta) \le \tfrac 1 \pi |C|$. This concludes the proof.
\end{proof}

Using Lemma~\ref{lem:short-diag}, we obtain the following generic upper bound:

\begin{lemma}\label{lem:ub-diag-helper}
	Let $k, a, b \in \N$ with $k = a + b$, and let $x \in \R$, $0 < x < 1$. Then $\gamma(k) \le \max{\left\{ (x+ \tfrac 1 \pi) \cdot \gamma(a), (1-x + \tfrac 1 \pi) \cdot \gamma(b) \right\}}$.
\end{lemma}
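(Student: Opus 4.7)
The plan is to mimic the argument that led to the $\gamma(2) \le \tfrac12 + \tfrac1\pi$ bound, but with an asymmetric split governed by the parameter $x$, and then to recurse on each side using the guarantees of $\gamma(a)$ and $\gamma(b)$, exactly as in the proof of Theorem~\ref{thm:ub-multiply}. Concretely, fix an arbitrary point set $P$ and let $C$ be an optimal tour of $P$, so $|C| = \TSP(P) =: L$. Apply Lemma~\ref{lem:short-diag} with parameter $xL$ to obtain a diagonal $pq$ of $C$ with $|C(p,q)| = xL$ and $|pq| \le L/\pi$.

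Next I would use $pq$ to split $P$ into two subsets: let $P_1$ consist of the points of $P$ lying on $C(p,q)$ and $P_2$ of those lying on $C(q,p)$ (if $p$ or $q$ happens to be in $P$, assign it to one side arbitrarily so that $P_1, P_2$ form a partition). Then $C_1 := C(p,q) \cup pq$ and $C_2 := C(q,p) \cup pq$ are closed polygonal curves through $P_1$ and $P_2$ respectively, with
\[ |C_1| \le xL + \tfrac 1 \pi L = \bigl(x + \tfrac 1 \pi\bigr) L, \qquad |C_2| \le (1-x)L + \tfrac 1 \pi L = \bigl(1-x + \tfrac 1 \pi\bigr) L. \]
In particular, $\TSP(P_1) \le (x + \tfrac 1 \pi) L$ and $\TSP(P_2) \le (1-x + \tfrac 1 \pi) L$.

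Now I invoke the definition of $\gamma$: there is a partition of $P_1$ into at most $a$ subsets, each of TSP cost at most $\gamma(a) \cdot \TSP(P_1) \le \gamma(a) (x + \tfrac 1 \pi) L$, and similarly a partition of $P_2$ into at most $b$ subsets, each of TSP cost at most $\gamma(b) (1-x + \tfrac 1 \pi) L$. Concatenating these two partitions produces a partition of $P$ into at most $a+b = k$ subsets, each of TSP cost at most
\[ \max\bigl\{ \gamma(a)\bigl(x + \tfrac 1 \pi\bigr),\ \gamma(b)\bigl(1-x + \tfrac 1 \pi\bigr) \bigr\} \cdot L. \]
Dividing by $\TSP(P) = L$ and taking the supremum over $P$ gives the claimed bound on $\gamma(k)$.

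No step looks like a real obstacle: Lemma~\ref{lem:short-diag} already provides the key geometric ingredient, and the recursion is identical in spirit to Theorem~\ref{thm:ub-multiply}. The only points that need a brief sentence of care are (i) verifying that $C_1$ and $C_2$ are genuinely closed curves passing through all points of $P_1$ and $P_2$ (immediate from the definition of $C(p,q)$), and (ii) handling the edge case where $p$ or $q$ coincides with a point of $P$, which is resolved by an arbitrary tie-breaking choice that does not affect the length bounds.
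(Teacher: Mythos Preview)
Your proposal is correct and follows essentially the same approach as the paper's own proof: split the optimal tour using the short diagonal from Lemma~\ref{lem:short-diag}, bound $\TSP(P_1)$ and $\TSP(P_2)$ via the resulting closed curves $C_1, C_2$, and then recurse using the definition of $\gamma(a)$ and $\gamma(b)$. The only cosmetic difference is that the paper normalizes $|C|=1$ while you keep the length $L$ explicit, and you are slightly more careful about the tie-breaking when $p$ or $q$ lies in $P$.
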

\begin{proof}
	Let $P$ be a point set, and let $T$ be a tour of $P$. Without loss of generality, we have $|T| = 1$.
	
	We first use Lemma~\ref{lem:short-diag} to obtain a diagonal $pq$ of length at most $\tfrac 1 \pi$ such that $|C(p,q)| = x$. As in the proof of the upper bound of Theorem~\ref{thm:bounds-2}, we take the two tours $C_1 = C(p,q) \cup pq$ and $C_2 = C(q,p) \cup pq$, with $|C_1| \le x + \tfrac 1 \pi$ and $|C_2| \le 1 - x + \tfrac 1 \pi$.
	
	Now partition $P$ into $P_1, P_2$ such that $P_1 \subseteq C_1$ and $P_2 \subseteq C_2$. By definition, we have $\TSP(P_1) \le x + \tfrac 1 \pi$ and $\TSP(P_2) \le 1 - x + \tfrac 1 \pi$. Optimally partitioning $P_1$ into $a$ parts will yield tours of length at most $\gamma(a) \cdot \TSP(P_1)$, by definition, and similarly, optimally partitioning $P_2$ into $b$ parts will yield tours of length $\gamma(b) \cdot \TSP(P_2)$. Overall, we have $a+b = k$ tours of length at most $\max{\left\{ \gamma(a) \cdot \TSP(P_1), \gamma(b) \cdot \TSP(P_2)\right\}}$, as desired.
\end{proof}

Optimizing for $x$ in Lemma~\ref{lem:ub-diag-helper} yields
\[ x = \frac{\gamma(b)}{\gamma(a) + \gamma(b)} + \frac{\gamma(b) - \gamma(a)}{\pi \cdot (\gamma(a) + \gamma(b))}. \]
Finally, with a simple calculation, we have:

\restateSplitUB*

Note that the upper bound of Theorem~\ref{thm:bounds-2} follows as a special case when $a = b = 1$ (recall that $\gamma(1) = 1$).

\newpage

\bibliography{references.bib} 
\bibliographystyle{alphaurl}

\end{document}